\documentclass[sigconf]{acmart}

\settopmatter{printacmref=false}

\usepackage{booktabs} % For formal tables
\usepackage{amsmath}
\usepackage[ruled]{algorithm2e}
\usepackage{pgfplots}
\usetikzlibrary{positioning}
\usepackage{tikz}
\usetikzlibrary{backgrounds}
\tikzset{>=latex}
\usetikzlibrary{arrows,shapes}
\usetikzlibrary{positioning,fit,calc}
\usepackage{caption}
\usepackage{subcaption}
\usepackage{amssymb}
\usepackage{gensymb}
\usepackage{graphicx}
\usepackage{multirow}

% Copyright
%\setcopyright{none}
%\setcopyright{acmcopyright}
%\setcopyright{acmlicensed}
\setcopyright{rightsretained}
%\setcopyright{usgov}
%\setcopyright{usgovmixed}	
%\setcopyright{cagov}
%\setcopyright{cagovmixed}

%% DOI
%\acmDOI{10.475/123_4}
%
%% ISBN
%\acmISBN{123-4567-24-567/08/06}
%
%%Conference
%\acmConference[WOODSTOCK'97]{ACM Woodstock conference}{July 1997}{El
%  Paso, Texas USA} 
%\acmYear{1997}
%\copyrightyear{2016}
%
%\acmPrice{15.00}

\begin{document}
\title{SPIN: A Fast and Scalable Matrix Inversion Method in Apache Spark}
%\titlenote{Produces the permission block, and
%  copyright information}
%\subtitle{Extended Abstract}
%\subtitlenote{The full version of the author's guide is available as
%  \texttt{acmart.pdf} document}

\author{Chandan Misra}
%\authornote{Dr.~Trovato insisted his name be first.}
%\orcid{1234-5678-9012}
\affiliation{%
  \institution{Indian Institute of Technology Kharagpur}
  %\streetaddress{P.O. Box 1212}
  %\city{Kharagpur} 
  \state{West Bengal, India} 
  %\postcode{43017-6221}
}
\email{chandan.misra1@gmail.com}

\author{Swastik Haldar}
\affiliation{%
	\institution{Indian Institute of Technology Kharagpur}
	%\streetaddress{P.O. Box 1212}
	%\city{Kharagpur} 
	\state{West Bengal, India} 
	%\postcode{43017-6221}
}
\email{swastik.haldar@iitkgp.ac.in}

\author{Sourangshu Bhattacharya}
%\authornote{The secretary disavows any knowledge of this author's actions.}
\affiliation{%
  \institution{Indian Institute of Technology Kharagpur}
  %\streetaddress{P.O. Box 1212}
  %\city{Kharagpur} 
  \state{West Bengal, India} 
%  \postcode{}
}
\email{sourangshu@cse.iitkgp.ernet.in}

\author{Soumya K. Ghosh}
%\authornote{This author is the
  %one who did all the really hard work.}
\affiliation{%
  \institution{Indian Institute of Technology Kharagpur}
  %\streetaddress{1 Th{\o}rv{\"a}ld Circle}
  %\city{Hekla} 
  \state{West Bengal, India}}
\email{skg@iitkgp.ac.in}

% The default list of authors is too long for headers}
\renewcommand{\shortauthors}{C. Misra et al.}

\begin{abstract}
The growth of big data in domains such as Earth Sciences, Social Networks, Physical Sciences, etc. has lead to an immense need for efficient and scalable linear algebra operations, e.g. Matrix inversion. Existing methods for efficient and distributed matrix inversion using big data platforms rely on LU decomposition based block-recursive algorithms. However, these algorithms are complex and require a lot of side calculations, e.g. matrix multiplication, at various levels of recursion. In this paper, we propose a different scheme based on Strassen's matrix inversion algorithm (mentioned in Strassen's original paper in 1969), which uses far fewer operations at each level of recursion. We implement the proposed algorithm, and through extensive experimentation, show that it is more efficient than the state of the art methods. Furthermore, we provide a detailed theoretical analysis of the proposed algorithm, and derive theoretical running times which match closely with the empirically observed wall clock running times, thus explaining the U-shaped behaviour w.r.t. block-sizes.
%This paper presents a new fast and scalable distributed matrix inversion algorithm on Apache Spark, called \textit{SPIN}, based on Strassen's matrix inversion algorithm. The underlying technique is based on an algorithm suggested by Strassen in 1969 which requires only $6$ matrix multiplications in each recursion level. We show experimentally that \textit{SPIN} has a good scalability with increasing matrix size enabling us to invert $(16384\times 16384)$ matrix with $37\%$ less wall clock time than its nearest baseline based on LU decomposition. 
%\footnote{This is an abstract footnote}
\end{abstract}

%
% The code below should be generated by the tool at
% http://dl.acm.org/ccs.cfm
% Please copy and paste the code instead of the example below. 
%
\begin{CCSXML}
	<ccs2012>
	<concept>
	<concept_id>10010147.10010919.10010172.10003817</concept_id>
	<concept_desc>Computing methodologies~MapReduce algorithms</concept_desc>
	<concept_significance>500</concept_significance>
	</concept>
	</ccs2012>
\end{CCSXML}

\ccsdesc[500]{Computing methodologies~MapReduce algorithms}

\copyrightyear{2018} 
\acmYear{2018} 
\setcopyright{acmcopyright}
\acmConference[ICDCN '18]{19th International Conference on Distributed Computing and Networking}{January 4--7, 2018}{Varanasi, India}
%\acmBooktitle{ICDCN '18: 19th International Conference on Distributed Computing and Networking, January 4--7, 2018, Varanasi, India}
\acmPrice{15.00}
\acmDOI{10.1145/3154273.3154300}
\acmISBN{978-1-4503-6372-3/18/01}

\keywords{Linear Algebra, Matrix Inversion, Strassen's Algorithm, Apache Spark}

%\acmBadgeR{artifacts_available}

\maketitle

\section{Introduction}
Dense matrix inversion is a basic procedure used by many applications in Data Science, Earth Science, Scientific Computing, etc, and has become an essential component of many such systems. It is an expensive operation, both in terms of computational and space complexity, and hence consumes a large fraction of resources in many of the workloads.
% essential part of many engineering practices and scientific applications and requires large number of computational efforts and storage space. 
In the big data era, many of these applications have to work on huge matrices, possibly stored over multiple servers, and thus consuming huge amounts of computational resources for matrix inversion. Hence, designing efficient large scale distributed matrix inversion algorithms, is an important challenge.% are found to be one of the computational bottlenecks. 
%Therefore, to reduce the consumed time and to increase efficiency, these large datasets need to be processed efficiently and thus distributed version of matrix inversion needed to be designed.

Since its release in 2012, \textit{Spark} \cite{zaharia2010spark} has been adopted as a dominant solution for scalable and fault-tolerant processing of huge datasets in many applications, e.g., machine learning \cite{meng2016mllib}, graph processing \cite{gonzalez2014graphx}, climate science \cite{palamuttam2015scispark}, social media analytics \cite{al2017mapreduce}, etc. Spark has gained its popularity for its in-memory distributed data processing ability, which runs interactive and iterative applications faster than Hadoop MapReduce. It's close intergration with Scala / Java, and the flexible structure for RDDs allow distributed recursive algorithms to be implemented efficiently, without compromising on scalability and fault-tolerance.
%Its general purpose engine supports a wide range of applications including batch, interactive, iterative algorithms and streaming and it offers simple APIs and rich built-in libraries like MLLib \cite{meng2016mllib}, GraphX \cite{gonzalez2014graphx} for data science tasks and data processing applications. 
Hence, in this paper we focus on Spark for implementation of large scale distributed matrix inversion.

There are a variety of existing inversion algorithms, e.g. methods based on QR decomposition \cite{press2007numerical}, LU decomposition \cite{press2007numerical}, Cholesky decomposition \cite{burian2003fixed}, Gaussian Elimination \cite{althoen1987gauss}, etc. Most of them require $O(n^{3})$ time (where $n$ denotes the order of the matrix), and main speed-ups in shared memory settings come from architecture specific optimizations (reviewed in section \ref{sec:related-work}). Surprisingly, there are not many studies on distributed matrix inversion using big-data frameworks, where jobs could be distributed over machines with a diverse set of architectures.
LU decomposition is the most widely used technique for distributed matrix inversion, possibly due to it's efficient block-recursive structure. Xiang et al. \cite{xiang2014scalable} proposed a Hadoop based implementation of inverting a matrix relying on computing the LU decomposition and discussed many Hadoop specific optimizations. Recently, Liu et al. \cite{liu2016spark} proposed several optimized block-recursive inversion algorithms on Spark based on LU decomposition. 
%The reason for choosing LU decomposition is that, it helps us to partition the computation suitable for MapReduce. Specifically, the computation are carried out into subtasks which exactly matches the way mappers and reducers work. 
In the block recursive approach \cite{liu2016spark}, the computation is broken down into subtasks that are computed as a pipeline of Spark tasks on a cluster. The costliest part of the computation is the matrix multiplication and the authors have given a couple of optimized algorithms to reduce the number of multiplications. However, in spite of being optimized, the implementation requires $9$ $O(n^{3})$ operations on the leaf node of the recursion tree, $12$ multiplications at each recursion level of LU decomposition and an additional $7$ multiplication after the LU decomposition to invert the matrix, which makes the implementation perform slower. 

In this paper, we use a much simpler and less exploited algorithm, proposed by Strassen in his 1969 multiplication paper \cite{strassen1969gaussian}. The algorithm follows similar block-recursion structure as LU decompostion, yet providing a simpler approach to matrix inversion. This approach involves no additional matrix multiplication at the leaf level of recursion, and requires only $6$ multiplications at intermediate levels. We propose and implement a distributed matrix inversion algorithm based on Strassen's original serial inversion scheme. We also provide a detailed analysis of wall clock time for the proposed algorithm, thus revealing the `U'-shaped behaviour with respect to block size. Experimentally, we show comprehensively, that the proposed approach is superior to the LU decomposition based approaches for all corresponding block sizes, and hence overall. We also demonstrate that our analysis of the proposed approach matches with the empirically observed wall clock time, and similar to ideal scaling behaviour.
In summary:
% study of the both algorithms depending on the block size which reveals that, for different block sizes, either the leaf node cost or the internal node cost (both computation and communication cost) dominates the total cost for both the approaches. We show that, our proposed approach is cheaper in both regimes. 
%The contributions of this paper are as follows ---
\begin{enumerate}
	\item We propose and implement a novel approach (SPIN) to distributed matrix inversion, based on an algorithm proposed by Strassen \cite{strassen1969gaussian}.
%	\item An implementation of the algorithm to compute the inverse of a matrix as Spark jobs. The source code of this implementation is available on BitBucket \cite{spin}.
	\item We provide a theoretical analysis of our proposed algorithm which matches closely with the empirically observed wall clock time.
	\item Through extensive experimentation, we show that the proposed algorithm is superior to the LU decomposition based approach.
\end{enumerate}

\section{Related Work}
\label{sec:related-work}
The literature on parallel and distributed matrix inversion can be divided broadly into three categories: 1) HPC based approach, 2) GPU based approach and 3) Hadoop and Spark based approach. Here, we briefly review them.

\subsection{HPC based approach}
LINPACK, LAPACK and ScaLAPACK are some of the most robust linear algebra software packages that support matrix inversion. LINPACK was written in Fortran and used on shared-memory vector computers. It has been superseded by LAPACK which runs more efficiently on modern cache-based architectures. LAPACK has also been extended to run on distributed-memory MIMD parallel computers in ScaLAPACK package. However, these packages are based on architectures and frameworks which are not fault tolerant and MapReduce based matrix inversion are more scalable than ScaLAPACK as shown in \cite{xiang2014scalable}. Lau et al. \cite{lau1996parallel} presented two algorithms for inverting sparse, symmetric and positive definite matrices on SIMD and MIMD respectively. The algorithm uses Gaussian elimination technique and the sparseness of the matrix to achieve higher performance. Bientinesi et al. \cite{bientinesi2008families} presented a parallel implementation of symmetric positive definite matrix on three architechtures --- sequential processors, symmetric multi-processors and distributed memory parallel computers using Cholesky factorization technique. Yang et al. \cite{yang2013parallel} presented a parallel algorithm for matrix inversion based on Gauss-Jordan elimination with partial pivoting. It used efficient mechanism to reduce the communication overhead and also provides good scalability. Bailey et al. presented techniques to compute inverse of a matrix using an algorithm suggested by Strassen in \cite{strassen1969gaussian}. It uses Newton iteration method to increase its stability while preserving parallelism. Most of the above works are based on specialized matrices and not meant for general matrices. In this paper, we concentrate on any kind of square positive definite and  invertible matrices which are distributed on large clusters which the above algorithms are not suitable for.

\subsection{Multicore and GPU based approach}
In order to fully exploit the multicore architecture, tile algorithms have been developed. Agullo et al. \cite{agullo2010towards} developed such a tile algorithm to invert a symmetric positive definite matrix using Cholesky decomposition.
Sharma et al. \cite{sharma2013fast} presented a modified Gauss-Jordan algorithm for matrix inversion on CUDA based GPU platform and studied the performance metrics of the algorithm. Ezzatti et al. \cite{ezzatti2011high} presented several algorithms for computing matrix inverse based on Gauss-Jordan algorithm on hybrid platform consisting of multicore processors connected to several GPUs. Although the above works have demonstrated that GPU can considerably reduce the computational time of matrix inversion, they are non-scalable centralized methods and need special hardwares.

\subsection{MapReduce based approach}
MadLINQ \cite{qian2012madlinq} offered a highly scalable, efficient and fault tolerant matrix computation system with a unified programming model which integrates with DryadLINQ, data parallel computing system. However, it does not mention any inversion algorithm explicitly. Xiang et al. \cite{xiang2014scalable} implemented first LU decomposition based matrix inversion in Hadoop MapReduce framework. However, it lacks typical Hadoop shortcomings like redundant data communication between map and reduce phases and inability to preserve distributed recursion structure. Liu et al. \cite{liu2016spark} provides the same LU based distributed inversion on Spark platform. It optimizes the algorithm by eliminating redundant matrix multiplications to achieve faster execution. Almost all the MapReduce based approaches relies on LU decomposition to invert a matrix. The reason is that it partitions the computation in a way suitable for MapReduce based systems. In this paper, we show that matrix inversion can be performed efficiently in a distributed environment like Spark by implementing Strassen's scheme which requires less number of multiplications than the earlier providing faster execution.

\section{Algorithm Design}
In this section, we discuss the implementation of \textit{SPIN} on Spark framework. First, we describe the original Strassen's inversion algorithm \cite{strassen1969gaussian} for serial matrix inversion in section \ref{sec:serial-strassen}. Next, in section \ref{sec:block-matrix-data-structure}, we describe the \textit{BlockMatrix} data structure from \textit{MLLib} which is used in our algorithm to distribute the large input matrix into the distributed file system. Finally, section \ref{sec:distributed-block-recursive} describes the distributed inversion algorithm, and its implementation strategy using \textit{Blockmatrix}.

\subsection{Strassen's Algorithm for Matrix Inversion}
\label{sec:serial-strassen}
Strassen's matrix inversion algorithm appeared in the same paper in which the well known Strassen's matrix multiplication was published. This algorithm can be described as follows. Let two matrices $A$ and $C=A^{-1}$ be split into half-sized sub-matrices:

\begin{center}
	$\begin{bmatrix}
		A_{11} & A_{12}\\ 
		A_{21} & A_{22}
		\end{bmatrix}^{-1} = 
		\begin{bmatrix}
		C_{11} & C_{12}\\ 
		C_{21} & C_{22}
	\end{bmatrix}$
\end{center}

Then the result $C$ can be calculated as shown in Algorithm \ref{alg:serial-strassen}. Intuitively, the steps involved in the algorithm are difficult to be performed in parallel. However, for input matrices which are too large to be fit into the memory on a single server, each such step is required to be processed distributively. These steps include breaking a matrix into four equal size sub-matrices, multiplication and subtraction of two matrices, multiplying a matrix to a scalar and arranging four half-sized sub-matrices into a full matrix. All these steps are done by splitting the matrix into blocks which act as execution unit of the spark job. A brief description of the block data structure is given below.

\begin{algorithm}	
    \SetKwInOut{Input}{Input}
    \SetKwInOut{Output}{Output}
    
    function Inverse$()$\;
    	\Input{Matrix $A$ (input matrix of size $n \times n$), int $threshold$}
    	\Output{Matrix $C$ (invert of matrix $A$}
    	\Begin{
        	\eIf{n=threshold}{
				invert A in any approach (e.g., LU, QR, SVD decomposition)\;
			}
            {
            	Compute $A_{11}, B_{11}, ..., A_{22}, B_{22}$ by computing $n=\frac{n}{2}$\;
    			$I \leftarrow A_{11}^{-1}$ \\
    			$II \leftarrow A_{21}.I$ \\
    			$III \leftarrow I.A_{12}$ \\
    			$IV \leftarrow A_{21}.III$ \\
    			$V \leftarrow IV-A_{22}$ \\
    			$VI \leftarrow V^{-1}$ \\
    			$C_{12} \leftarrow III.VI$ \\
    			$C_{21} \leftarrow VI.II$ \\
    			$VII \leftarrow III.C_{21}$ \\
    			$C_{11} \leftarrow I-VII$ \\
    			$C_{22} \leftarrow -VI$ \\
            }
    		
    		\Return $C$
    	}
    \caption{Strassen's Serial Inversion Algorithm}
	\label{alg:serial-strassen}
\end{algorithm}

\subsection{Block Matrix Data Structure}
\label{sec:block-matrix-data-structure}
In order to distribute the matrix in the HDFS (Hadoop Distributed File System), we create a distributed matrix called BlockMatrix, which is basically an RDD of MatrixBlocks spread in the cluster. Distributing the matrix as a collection of Blocks makes them easy to be processed in parallel and follow divide and conquer approach.\textit{MatrixBlock} is a block of matrix represented as a tuple \textit{((rowIndex, columnIndex), Matrix)}. Here, \textit{rowIndex} and \textit{columnIndex} are the row and column index of a block of the matrix. \textit{Matrix} refers to a one-dimensional array representing the elements of the matrix arranged in a column major fashion.

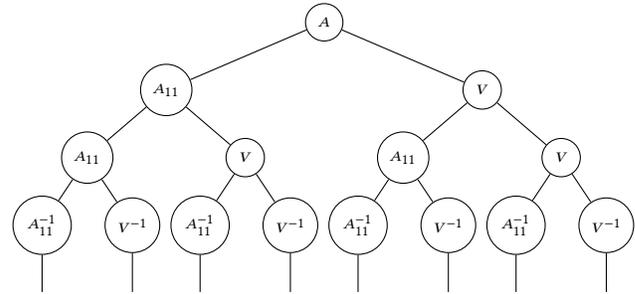
\begin{figure}
	\centering
	\begin{tikzpicture}[scale=0.6,level 1/.style={sibling distance=70mm},level 2/.style={sibling distance=35mm, style={draw}},level 3/.style={sibling distance=20mm},circleNodeEmpty/.style={rectangle}]
	\node [circle,draw] (z){\tiny $A$}
	child {node [circle,draw] (a) {\tiny $A_{11}$} 
		child {node [circle,draw] (b) {\tiny $A_{11}$}
			child {node  [circle,draw] (d) {\tiny $A_{11}^{-1}$}} 
			child {node  [circle,draw] (e) {\tiny $V^{-1}$}}
		}
		child {node [circle,draw] (g) {\tiny $V$}
			child {node  [circle,draw] (M) {\tiny $A_{11}^{-1}$}}
			child {node  [circle,draw] (N) {\tiny $V^{-1}$}}
		}
	}
	child {node [circle,draw] (j) {\tiny $V$}
		child {node [circle,draw] (k) {\tiny $A_{11}$}
			child {node  [circle,draw] (O) {\tiny $A_{11}^{-1}$}}
			child {node  [circle,draw] (P) {\tiny $V^{-1}$}}
		}
		child {node [circle,draw] (l) {\tiny $V$}
			child {node [circle,draw] (Q) {\tiny $A_{11}^{-1}$}}
			child {node [circle,draw] (o) {\tiny $V^{-1}$}}
		}
	};
    
    \node[circleNodeEmpty](1)[below of = d]{};
    \node[circleNodeEmpty](2)[below of = e]{};
    \node[circleNodeEmpty](3)[below of = M]{};
    \node[circleNodeEmpty](4)[below of = N]{};
    \node[circleNodeEmpty](5)[below of = O]{};
    \node[circleNodeEmpty](6)[below of = P]{};
    \node[circleNodeEmpty](7)[below of = Q]{};
    \node[circleNodeEmpty](8)[below of = o]{};
    
    \draw (d.south) -- (1.north);
    \draw (e.south) -- (2.north);
    \draw (M.south) -- (3.north);
    \draw (N.south) -- (4.north);
    \draw (O.south) -- (5.north);
    \draw (P.south) -- (6.north);
    \draw (Q.south) -- (7.north);
    \draw (o.south) -- (8.north);
	\end{tikzpicture}	
	\caption{Recursion tree for Algorithm \ref{alg:distributed-strassen}}
	\label{fig:recursion-tree}
\end{figure}

\subsection{Distributed Block-recursive Matrix Inversion Algorithm}
\label{sec:distributed-block-recursive}
The distributed block-recursive algorithm can be visualized as Figure \ref{fig:recursion-tree}, where upper left sub-matrix is divided recursively until it can be inverted serially on a single machine. After the leaf node inversion, the inverted matrix is used to compute intermediate matrices, where each step is done distributively. Another recursive call is performed for matrix $VI$ until leaf node is reached. Like $A_{11}$, it is also inverted on a single node when the leaf node is reached. The core inversion algorithm (described in Algorithm \ref{alg:distributed-strassen}) takes a matrix (say $A$) represented as \textit{BlockMatrix}, as input as shown in Figure \ref{fig:recursion-tree}. The core computation performed by the algorithm is based on six distributed methods, which are as follows:
\begin{itemize}
	\item \textit{breakMat}: Breaks a matrix into four equal sized sub-matrices
	\item \textit{xy}: Returns one of the four sub-matrices after the breaking,  according to the index specified by $x$ and $y$.
	\item \textit{multiply}: Multiplies two \textit{BlockMatrix}
	\item \textit{subtract}: Subtracts two \textit{BlockMatrix}
	\item \textit{scalarMul}: Multiples a scalar with a \textit{BlockMatrix}
	\item \textit{arrange}: Arranges four equal quarter \textit{BlockMatrices} into a single full \textit{BlockMatrix}.
\end{itemize}

Below we describe the methods in a little bit more details and also provide the algorithm for each.

\begin{algorithm}	
    \SetKwInOut{Input}{Input}
    \SetKwInOut{Output}{Output}

    function Inverse$()$\;
    \Begin{
    	\Input{BlockMatrix $A$, int $size$, int $blockSize$}
    	\Output{BlockMatrix $AInv$}
        $size =$ Size of matrix $A$ or $B$\;
    	$blockSize =$ Size of a single matrix block\;
    	$n =$ $\frac{size}{blockSize}$\;
    	\eIf{$n=1$}
    	{
    		$RDD<Block> invA\leftarrow A.toRDD()$ \\
    		Map()\;
    		\Begin{
    			\Input{Block block}
    			\Output{Block block}
    			$block.matrix\leftarrow locInverse(block.matrix)$ \\
    			\Return $block$
    		}
    		$blockAInv\leftarrow invA.toBlockMatrix()$ \\
    		\Return $blockAInv$
    	}
    	{
    		$size\leftarrow size/2$ \\
    		$pairRDD\leftarrow breakMat(A, size)$ \\
    		$A11\leftarrow \_11(pairRDD, blockSize)$ \\
    		$A12\leftarrow \_12(pairRDD, blockSize)$ \\
    		$A21\leftarrow \_21(pairRDD, blockSize)$ \\
    		$A22\leftarrow \_22(pairRDD, blockSize)$ \\
    		$I\leftarrow Inverse(A11, size, blockSize)$ \\
    		$II\leftarrow multiply(A21,I)$ \\
    		$III\leftarrow multiply(I,A12)$ \\
    		$IV\leftarrow multiply(A21,III)$ \\
    		$V\leftarrow subtract(IV,A22)$ \\
    		$VI\leftarrow Inverse(V,size, blockSize)$ \\
    		$C12\leftarrow multiply(III,VI)$ \\
    		$C21\leftarrow multiply(VI,II)$ \\
    		$VII\leftarrow multiply(III,C21)$ \\
    		$C11\leftarrow subtract(I,VII)$ \\
    		$C22\leftarrow scalerMul(VI, -1, blockSize)$ \\
    		$C\leftarrow arrange(C11, C12, C21, C22, size, blockSize)$ \\
    		\Return $C$
    	}
}    
\caption{Spark Algorithm for Strassen's Inversion Scheme}
\label{alg:distributed-strassen}
\end{algorithm}

\textbf{\textit{breakMat}} method breaks a matrix into four sub-matrices but does not return four sub-matrices to the caller. It just prepare the input matrix to a form which help filtering each part easily. As described in Algorithm \ref{alg:breakMat} it takes a \textit{BlockMatrix} and returns a \textit{PairRDD} of \textit{tag} and \textit{Block} using a \textit{mapToPair} transformation. First, the \textit{BlockMatrix} is converted into an RDD of \textit{MatrixBlocks}. Then, each \textit{MatrixBlock} of the RDD is mapped to tuple of \textit{(tag, MatrixBlock)}, resulting a \textit{pairRDD} of such tuples. Inside the \textit{mapToPair} transformation, we carefully tag each \textit{MatrixBlock} according to which quadrant it belongs to.

\begin{algorithm}
	\SetKwInOut{Input}{Input}
	\SetKwInOut{Output}{Output}
	
	function breakMat$()$\;
	\Begin{
		\Input{BlockMatrix A, int size}
		\Output{PairRDD brokenRDD}		
		$ARDD\leftarrow A.toRDD$ \\
		MapToPair()\;
		\Begin{
			\Input{block of ARDD}
			\Output{tuple of brokenRDD}
			$ri\leftarrow block.rowIndex$ \\
			$ci\leftarrow block.colIndex$ \\
			\If{$ri/size=0$ \& $ci/size=0$}
			{
				$tag\leftarrow ``A11"$ \\
			}
			\ElseIf{$ri/size=0$ \& $ci/size=1$}
			{
				$tag\leftarrow ``A12"$ \\
			}
			\ElseIf{$ri/size=1$ \& $ci/size=0$}
			{
				$tag\leftarrow ``A21"$ \\
			}
			\Else{
				$tag\leftarrow ``A22"$ \\
			}
			$block.rowIndex\leftarrow ri\%size$ \\
			$block.colIndex\leftarrow ci\%size$ \\
			\Return $Tuple2(tag, block)$
		}
		
		\Return $brokenMat$
	}    
	\caption{Spark Algorithm for breaking a \textit{BlockMatrix}}
	\label{alg:breakMat}
\end{algorithm}

\textbf{\textit{xy}} method is a generic method signature for four methods used for accessing one of the four sub-matrices of size $2^{n-1}$ from a matrix of size $2^{n}$. Each method consists of two transformation --- \textit{filter} and \textit{map}. \textit{filter} takes the matrix as a \textit{pairRDD} of \textit{(tag, MatrixBlock)} tuple which was the output of \textit{breakMat} method and filters the appropriate portion against the tag associated with the \textit{MatrixBlock}. Then it converts the \textit{pairRDD} into \textit{RDD} using the \textit{map} transformation.

\begin{algorithm}
	\SetKwInOut{Input}{Input}
	\SetKwInOut{Output}{Output}
	
	function xy$()$\;
	\Begin{
		\Input{PairRDD $brokenRDD$}
		\Output{BlockMatrix $xy$}
        filter()\;
        \Begin{
        	\Input{PairRDD $brokenRDD$}
			\Output{PairRDD $filteredRDD$}
            \Return $brokenRDD.tag = ``A_{xy}"$
        }
        map()\;
        \Begin{
        	\Input{PairRDD $filteredRDD$}
			\Output{RDD $rdd$}
            \Return $filteredRDD.block$
        }
        $xy\leftarrow rdd.toBlockMatrix()$ \\
        \Return $xy$
	}    
	\caption{Spark Algorithm for multiplying a scalar to a distributed matrix}
\end{algorithm}

\textbf{\textit{multiply}} method multiplies two input sub-matrices and returns another sub-matrix of \textit{BlockMatrix} type. Multiply method in our algorithm uses naive block matrix multiplication approach, which replicates the blocks of matrices and groups the blocks together to be multiplied in the same node. It uses co-group to reduce the communication cost.

\textbf{\textit{subtract}} method subtracts two \textit{BlockMatrix} and returns the result as \textit{BlockMatrix}.

\textbf{\textit{scalarMul}} method (as described in Algorithm \ref{alg:scalarMul}), takes a \textit{BlockMatrix} and returns another \textit{BlockMatrix} using a \textit{map} transformation. The \textit{map} takes blocks one by one and multiply each element of the block with the scalar.

\begin{algorithm}
	\SetKwInOut{Input}{Input}
	\SetKwInOut{Output}{Output}
	
	function scalarMul$()$\;
	\Begin{
		\Input{BlockMatrix A, double scalar, int blockSize}
		\Output{BlockMatrix productMat}
		$ARDD\leftarrow A.toRDD()$ \\
		Map()\;
		\Begin{
			\Input{block of ARDD}
			\Output{block of productRDD}
			$product\leftarrow block.matrix.toDoubleMatrix$ \\			
			$block.matrix\leftarrow product.toMatrix$ \\
			\Return $block$
		}
		$productMat\leftarrow product.toBlockMatrix()$ \\
		\Return $productMat$
	}    
	\caption{Spark Algorithm for multiplying a scalar to a distributed matrix}
	\label{alg:scalarMul}
\end{algorithm}

\textbf{\textit{arrange}} method (as described in Algorithm \ref{alg:arrange}), takes four sub-matrices of size $2^{n-1}$ which represents four co-ordinates of a full matrix of size $2^{n}$ and arranges them in later and returns it as \textit{BlockMatrix}. It consists of four \textit{map}s, each one for a separate \textit{BlockMatrix}. Each \textit{map} maps the block index to a different block index that provides the final position of the block in the result matrix.

\begin{algorithm}
	\SetKwInOut{Input}{Input}
	\SetKwInOut{Output}{Output}
	
	function arrange$()$\;
	\Begin{
		\Input{BlockMatrix C11, BlockMatrix C12, BlockMatrix C21, BlockMatrix C22, int size, int blockSize}
		\Output{BlockMatrix arranged}
			$C11RDD\leftarrow C11.toRDD()$ \\
			$C12RDD\leftarrow C12.toRDD()$ \\
			$C21RDD\leftarrow C21.toRDD()$ \\
			$C22RDD\leftarrow C22.toRDD()$ \\
			Map()\;
			\Begin{
				\Input{block of C12RDD}
				\Output{block of C1}
				$block.colIndex\leftarrow block.colIndex + size$ \\
				\Return $block$
			}
			Map()\;
			\Begin{
				\Input{block of C21RDD}
				\Output{block of C2}
				$block.rowIndex\leftarrow block.rowIndex + size$ \\
				\Return $block$
			}
			Map()\;
			\Begin{
				\Input{block of C22}
				\Output{block of C3}
				$block.rowIndex\leftarrow block.rowIndex + size$ \\
				$block.colIndex\leftarrow block.colIndex + size$ \\
				\Return $block$
			}
			$unionRDD\leftarrow C11RDD.union(C1.union(C2.union(C3)))$ \\
			$C\leftarrow unionRDD.toBlockMatrix()$ \\
			\Return $C$
	}    
	\caption{Spark Algorithm for rearranging four sub-matrices into single matrix}
	\label{alg:arrange}
\end{algorithm}

\section{Performance Analysis}
In this section, we attempt to estimate the performances of the proposed approach, and state-of-the-art approach using LU decomposition for distributed matrix inversion. In this work, we are interested in the \textit{wall clock running time} of the algorithms for varying number of nodes, matrix sizes and other algorithmic parameters e.g., partition / block sizes. This is because we are interested in the practical efficiency of our algorithm which includes not only the time spent by the processes in the CPU, but also the time taken while waiting for the CPU as well as data communication during shuffle. The wall clock time depends on three independently analyzed quantities: total \textit{computational complexity} of the sub-tasks to be executed, total \textit{communication complexity} between executors of different sub-tasks on each of the nodes, and \textit{parallelization factor} of each of the sub-tasks or the total number of processor cores available.

Later, in section \ref{sec:experiments}, we compare the theoretically derived estimates of \textit{wall clock time} with empirically observed ones, for validation. We consider only square matrices of dimension $2^{p}$ for all the derivations. The key input and tunable parameters for the algorithms are:

\begin{itemize}
\item $n = 2^p$: number of rows or columns in matrix $A$
\item $b$ = number of splits for square matrix
    \item $2^{q} = \frac{n}{b}$ = block size in matrix $A$
    \item $cores$ = Total number of physical cores in the cluster
    \item $i$ = current processing level of algorithm in the recursion tree.
    \item $m$ = total number of levels of the recursion tree.
\end{itemize}
Therefore,
\begin{itemize}
    \item Total number of blocks in matrix $A$ or $B$ = $b^{2}$
    \item $b$ = $2^{p-q}$
\end{itemize}

\begin{lemma}
The proposed distributed block recursive strassen's matrix inversion algorithm or \textit{SPIN} (presented in Algorithm \ref{alg:distributed-strassen}) has a complexity in terms of wall clock execution time requirement, where $n$ is the matrix dimension, $b$ is the number of splits, and $cores$ is the actual number of physical cores available in the cluster, as

\begin{equation}
\label{eq:SPIN-cost}
\begin{aligned}
Cost_{SPIN}&=\left(\frac{n^{3}}{b^{2}} \right )+
\frac{10b^{2}-6b}{min\left[\frac{b^{2}}{4^{i}},cores \right ]}
+\frac{(b-1)+\left(9b^{2}+n^{2}\left(b+1\right)\right)}{b\times min\left[\frac{b^{2}}{4^{i+1}},cores \right ]} \\
&+\frac{n^{2}(b^{2}n+b^{2}-2n)}{b^{2}\times min\left[\frac{n^{2}}{4^{i+1}},cores \right ]}
\end{aligned}
\end{equation}
\end{lemma}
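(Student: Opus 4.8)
We sketch how the bound is assembled; full details are deferred. The plan is to charge the wall-clock time to the nodes of the recursion tree of Algorithm~\ref{alg:distributed-strassen} (Figure~\ref{fig:recursion-tree}) and, level by level, to add three separately estimated quantities at each node: the local block-computation work, the shuffle volume the block operations induce, and the parallelism actually usable (the number of independent block tasks, capped by $cores$). First I would fix the tree geometry. Because $VI\leftarrow\mathrm{Inverse}(V,\dots)$ depends on $I=A_{11}^{-1}$ through the chain $I\to III\to IV\to V$, the two recursive calls inside a node — hence all nodes of a given level — run sequentially; writing $m=\log_2 b$ for the depth at which a sub-matrix shrinks to one $2^q\times2^q=(n/b)\times(n/b)$ block, the tree has $b$ leaves and $\sum_{i=0}^{m-1}2^i=b-1$ internal nodes, and each level-$i$ node handles a sub-matrix of dimension $n/2^i$, i.e.\ $b^2/4^i$ blocks.

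Next I would dispatch the leaf term: each of the $b$ leaves runs the local $O(d^3)$ routine \textit{locInverse} on a $d\times d$ block with $d=n/b$, and since the leaves are serial and a single block admits no intra-operation parallelism, their total contribution is $b\cdot(n/b)^3=n^3/b^2$, independent of $cores$.

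The heart of the argument is the per-primitive accounting inside an internal node at level $i$. For \textit{breakMat}, the four \textit{xy} extractions, the six \textit{multiply}s, the two \textit{subtract}s, the \textit{scalarMul}, and \textit{arrange} I would record: (i) the number of block-granular tasks — $\Theta(b^2/4^i)$ for the primitives that sweep the level-$i$ matrix (\textit{breakMat}, \textit{arrange}, the filters of Algorithm~\ref{alg:breakMat}) and $\Theta(b^2/4^{i+1})$ for those acting on the half-sized sub-matrices (\textit{subtract}, \textit{scalarMul}, and the output-block enumeration of \textit{multiply}); (ii) the cost per task — $\Theta(1)$ index bookkeeping for \textit{breakMat}/\textit{xy}/\textit{arrange}, $\Theta(d^2)$ for \textit{subtract}/\textit{scalarMul}, and, for each \textit{multiply}, a replicate-then-cogroup over two $(b/2^{i+1})\times(b/2^{i+1})$ block grids producing $\Theta((b/2^{i+1})^3)$ block products at $\Theta(d^3)$ each and a shuffle of $\Theta((b/2^{i+1})^3 d^2)$ words; and (iii) the usable parallelism $\min[\,\text{(number of tasks)},\,cores]$. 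Multiplying each per-node figure by the $2^i$ nodes at level $i$ and summing over $i=0,\dots,m-1$, the geometric series (ratios $1/2,1/4,1/8$) collapse — using $2^m=b$ to turn the tail factors $4^{-m},8^{-m}$ into $1/b^2,1/b^3$ — into exactly the closed forms in the statement: a $\Theta(b^2)-\Theta(b)$ block-count term over $\min[b^2/4^i,cores]$, a $\Theta(b^2)+\Theta(bn^2)+\Theta(n^2)$ shuffle term over $b\cdot\min[b^2/4^{i+1},cores]$ (with the $b-1$ internal-node count contributing the $\Theta(1)$-per-node driver overhead), and a $\Theta(n^3)-\Theta(n^3/b^2)$ block-multiplication-compute term over $b^2\cdot\min[n^2/4^{i+1},cores]$; combined with the leaf term this is \eqref{eq:SPIN-cost}.

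I expect the obstacle to be twofold. The first is pinning down the Spark cost model precisely — above all \textit{multiply}, whose replication/cogroup makes its compute ($\propto k^3 d^3$), its communication ($\propto k^3 d^2$) and its output cardinality ($\propto k^2$) grow on three different powers of the block-grid side $k=b/2^{i+1}$, so that attributing the right parallelism denominator to the right piece is what yields the exact constants ($10$, $6$, the $9b^2$, the $n^2(b+1)$, the $-2n$) instead of mere $\Theta(\cdot)$ estimates. The second is that the level-sum does not commute with $\min[\cdot,cores]$: at shallow levels the block count exceeds $cores$ and the denominator saturates, which changes the common ratio of the series, so the sum must be split at the crossover level $i^\star$ with $b^2/4^{i^\star}\approx cores$, each part bounded separately, and the two refolded into the single $\min[\cdot,cores]$ form the lemma displays; checking that this refolding stays tight across the whole range of $b$ is what makes the resulting curve genuinely $U$-shaped in $b$, and is the delicate part.
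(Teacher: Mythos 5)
Your plan follows essentially the same route as the paper's proof: per-primitive computation and communication costs are tallied level by level over the $2^{i}$ nodes at level $i$, the geometric series are collapsed using $2^{m}=b$, each aggregate is divided by its parallelization factor $\min\left[\text{block count},cores\right]$, and the serial leaf term $n^{3}/b^{2}$ is added on top, exactly as in Table \ref{tab:cost-LU-Spin} and the surrounding derivation. The one point where you are more careful than the paper --- splitting the level sum at the saturation level of the $\min$ rather than interchanging sum and $\min$ --- is a refinement of the same argument (the paper silently skips it and leaves the bound index $i$ in its final formula), not a different approach.
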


\begin{proof}
Before going into details of the analysis, we give the performance analysis of the methods described in section \ref{sec:distributed-block-recursive}. A summary of the independently analyzed quantities is given in Table \ref{tab:cost-LU-Spin}.

\begin{table*}
\caption{Summary of the cost analysis of \textit{LU } and \textit{SPIN}}
\label{tab:cost-LU-Spin}
	\begin{center}
		\begin{tabular}{|c|c|c|c|c|}
           	\hline
           	\multirow{2}{*}{Method} & \multicolumn{2}{|c|}{Computation Cost} & \multicolumn{2}{c|}{Parallelization Factor} \\ \cline{2-5}
           	& LU & SPIN & LU & SPIN \\
           	\hline
           	leafNode & $9\times \frac{n^{3}}{b^{2}}$ & $\frac{n^{3}}{b^{2}}$ & --- & --- \\
           	\hline
           	breakMat & $\frac{2}{3}\left(b^{2}-3b+2 \right )$ & $2b^{2}-2b$ & $min\left[\frac{b^{2}}{4^{i}},cores \right ]$ &  $min\left[\frac{b^{2}}{4^{i}},cores \right ]$ \\
           	\hline
           	xy (filter) & $\frac{2}{3}\left(b^{2}-3b+2 \right )$ & $8b^{2}-4b$ & $min\left[\frac{b^{2}}{4^{i+1}},cores \right ]$ & $min\left[\frac{b^{2}}{4^{i}},cores \right ]$ \\
           	\hline
           	xy (map) & $\frac{1}{6}\left(b^{2}-3b+2 \right )$ & $2b^{2}-2b$ & $min\left[\frac{b^{2}}{4^{i+2}},cores \right ]$ & $min\left[\frac{b^{2}}{4^{i+1}},cores \right ]$ \\
           	\hline
           	multiply (large) & $\frac{16n^{3}}{21b^{3}}(b^{3}-7b+6)$ & $\frac{n^{3}}{6b^{2}}(b^{2}-1)$ & $min\left[\frac{n^{2}}{4^{i}},cores \right ]$ & $min\left[\frac{n^{2}}{4^{i+1}},cores \right ]$ \\
           	\hline
           	multiply Communication (large) & $\frac{8n^{2}(b^{2}-1)(8b^{2}-112)}{105b^{2}}$ & $\frac{n^{2}(b^{2}-1)}{6b}$ & $min\left[\frac{b^{2}}{4^{i}},cores \right ]$ & $min\left[\frac{b^{2}}{4^{i+1}},cores \right ]$\\
           	\hline
           	multiply (small) & $\frac{8n^{3}}{42b^{3}}(b^{3}-7b+6)$ & --- & $min\left[\frac{n^{2}}{4^{i+1}},cores \right ]$ & --- \\
           	\hline
           	multiply Communication (small) & $\frac{n^{2}(b^{2}-1)(8b^{2}-112)}{105b^{2}}$ & --- & $min\left[\frac{b^{2}}{4^{i+1}},cores \right ]$ & --- \\
           	\hline
           	subtract & $\frac{2n^{2}}{3b^{2}}(b^{2}-3b+2)$ & $\frac{n^{2}}{2b}(b-1)$ & $min\left[\frac{n^{2}}{4^{i}},cores \right ]$ & $min\left[\frac{n^{2}}{4^{i+1}},cores \right ]$\\
           	\hline
           	scalarMul & $\frac{4}{3}\left(b^{2}-3b+2 \right )$ & $\frac{b}{2}\left(b-1 \right )$ & $min\left[\frac{b^{2}}{4^{i}},cores \right ]$ & $min\left[\frac{b^{2}}{4^{i+1}},cores \right ]$ \\
           	\hline
           	arrange & --- & $\frac{b}{2}\left(b-1 \right )$ & --- & $min\left[\frac{b^{2}}{4^{i+1}},cores \right ]$ \\
           	\hline
           	Additional Cost & $7\times \left(\frac{n}{2} \right )^{3}$ & --- & $min\left[\frac{n^{2}}{4},cores \right ]$ & --- \\
           	\hline
		\end{tabular}
	\end{center}
\end{table*}

There are two primary part of the algorithm --- \textit{if} part and \textit{else} part. \textit{If} part does the calculation of the leaf nodes of the leaf nodes of the recursion tree as shown in Figure \ref{fig:recursion-tree}, while \textit{else} part does the computation for internal nodes. It is clearly seen from the figure that, at level $i$, there are $2^{i}$ nodes and the leaf level contains $2^{p-q}$ nodes.

There is only one transformation in \textit{if} part which is \textit{map}. It calculates the inverse of a matrix block in a single node using serial matrix inversion method. The size of each block is $n/b$ and we need $\approx(n/b)^{3}$ time to perform each such method. Therefore, the computation cost to process all the leaf nodes is 

\begin{equation}
\begin{aligned}
Comp_{leafNode}=2^{p-q}\times \left(\frac{n}{b}\right)^{3}=\frac{n^{3}}{b^{2}}
\end{aligned}
\end{equation}

In \textit{SPIN}, leaf nodes processes one block on a single machine of the cluster. In spite of being small enough to be accommodated in a single node, we do not collect them in the master node for the communication cost. Instead, we do a \textit{map} which takes the only block of the RDD, do the calculation and return the RDD again. 

\textbf{\textit{breakMat}} method takes a \textit{BlockMatrix} and returns a \textit{PairRDD} of \textit{tag} and \textit{Block} using a \textit{mapToPair} transformation. If the method is executed for $m$ levels, the computation cost of \textit{breakMat} is

\begin{equation}
\begin{aligned}
Comp_{breakMat}=\sum_{i=0}^{m-1}2^{i}\times \left(\frac{b^{2}}{4^{i}}\right)=2b\left(b-1\right)
\end{aligned}
\end{equation}

Note that, $i^{th}$ level contains $2^{i}$ nodes. Here each block is consumed in parallel giving parallelization factor as

\begin{equation}
\begin{aligned}
PF_{breakMat}&=min\left[\left(\frac{b^{2}}{4^{i}}\right),cores\right]
\end{aligned}
\end{equation}

The total number of blocks processed in \textit{filter} and \textit{map} are $\left(\frac{b^{2}}{4^{i}}\right)$ and $\left(\frac{b^{2}}{4^{i+1}}\right)$ for $i^{th}$ level respectively. Consequently, the parallelization factor of both of them are $min\left[\left(\frac{b^{2}}{4^{i}}\right),cores\right]$ and $min\left[\left(\frac{b^{2}}{4^{i+1}}\right),cores\right]$ respectively. Therefore, the computation cost for \textit{xy} is

\begin{equation}
\begin{aligned}
Comp_{xy}&=\left[\frac{\sum_{i=0}^{m-1}2^{i}\times \left(\frac{b^{2}}{4^{i}}\right)}{min\left[\left(\frac{b^{2}}{4^{i}}\right),cores\right]}+\frac{\sum_{i=0}^{m-1}2^{i}\times \left(\frac{b^{2}}{4^{i+1}}\right)}{min\left[\left(\frac{b^{2}}{4^{i+1}}\right),cores\right]}\right] \\
&=\left[\frac{8b^{2}-4b}{min\left[\left(\frac{b^{2}}{4^{i}}\right),cores\right]}+\frac{2b^{2}-2b}{min\left[\left(\frac{b^{2}}{4^{i+1}}\right),cores\right]}\right]
\end{aligned}
\end{equation}

\textbf{\textit{multiply}} method multiplies two \textit{BlockMatrices}, the computation cost of which can be derived as

\begin{equation}
\label{eq:multiply}
\begin{aligned}
Comp_{multiply}=\sum_{i=0}^{m-1}2^{i}\times \left(\frac{n^{3}}{8^{i+1}}\right)=\frac{n^{3}\left(b^{2}-1\right)}{6b^{2}}
\end{aligned}
\end{equation}

and the parallelization factor will be 

\begin{equation}
\begin{aligned}
PF_{multiply}&=min\left[\frac{n^{2}}{4^{i+1}},cores\right]
\end{aligned}
\end{equation}

\textbf{\textit{subtract}} method subtracts two \textit{BlockMatrices} using a \textit{map} transformation. There are two subtraction in each recursion level. Therefore,

\begin{equation}
\begin{aligned}
Comp_{subtract}=\sum_{i=0}^{m-1}2^{i}\times \left(\frac{n^{2}}{4^{i+1}}\right)=\frac{n^{2}\left(b-1 \right )}{2b}
\end{aligned}
\end{equation}

and the parallelization factor will be 

\begin{equation}
\begin{aligned}
PF_{subtract}&=min\left[\frac{n^{2}}{4^{i+1}},cores\right]
\end{aligned}
\end{equation}

\textbf{\textit{scalarMul}} method (as described in Algorithm \ref{alg:scalarMul}), takes a \textit{BlockMatrix} and returns another \textit{BlockMatrix} using a \textit{map} transformation. The \textit{map} takes blocks one by one and multiply each each element of the block with the scalar. Therefore, the computation cost of \textit{scalarMul} is

\begin{equation}
\label{eq:scalarMul}
\begin{aligned}
Comp_{scalarMul}=\sum_{i=0}^{m-1}2^{i}\times \left(\frac{b^{2}}{4^{i+1}}\right)=\frac{b}{2}\left(b-1\right)
\end{aligned}
\end{equation}

Again, here each block is consumed in parallel giving parallelization factor as

\begin{equation}
\begin{aligned}
PF_{scalarMul}&=min\left[\left(\frac{b^{2}}{4^{i+1}}\right),cores\right]
\end{aligned}
\end{equation}

\textbf{\textit{arrange}} method (as described in Algorithm \ref{alg:arrange}), takes four sub-matrices of size $2^{n-1}$ which represents four co-ordinates of a full matrix of size $2^{n}$ and arranges them in later and returns it as \textit{BlockMatrix}. It consists of four \textit{map}s, each one for a separate \textit{BlockMatrix}. Each \textit{map} maps the block index to a different block index that provides the final position of the block in the result matrix. The computation cost and parallelization factor for \textit{maps} are same as \textit{scalarMul}, which can be found in equation \ref{eq:scalarMul}.

%\begin{equation}
%\begin{aligned}
%Comp_{arrange}=\sum_{i=0}^{m-1}2^{i}\times \left(\frac{b^{2}}{4^{i+1}}\right)=\frac{b}{2}\left(b-1\right)
%\end{aligned}
%\end{equation}
%
%The parallelization factor is same, which is
%
%\begin{equation}
%\begin{aligned}
%PF_{arrange}&=min\left[\left(\frac{b^{2}}{4^{i+1}}\right),cores\right]
%\end{aligned}
%\end{equation}

\textit{SPIN} requires $4$ \textit{xy} method calls, $6$ multiplications, and $2$ subtractions for each recursion level. When summed up it will give equation \ref{eq:SPIN-cost}.

\end{proof}

%Therefore total cost is

%\begin{equation}
%\begin{aligned}
%Cost_{SPIN}&=Comp_{leafNode}+Comp_{breakMat}+4\times Comp_{xy} \\
%&+6\times Comp_{multiply}+6\times Comm_{multiply} \\
%&+2\times Comp_{subtract}+Comp_{scalarMul}+Comp_{arrange} \\
%&=\left(\frac{n^{3}}{b^{2}} \right )+
%\frac{10b^{2}-6b}{min\left[\frac{b^{2}}{4^{i}},cores \right ]} \\
%&+\frac{(b-1)(7b^{2})+6n^{2}(b^{2}+b+1)}{7b\times min\left[\frac{b^{2}}{4^{i+1}},cores \right ]} \\
%&+\frac{n^{2}(b-1)(3bn+3n+1)}{16b^{3}\times min\left[\frac{n^{2}}{4^{i+1}},cores \right ]}
%\end{aligned}
%\end{equation}

\begin{lemma}
The proposed distributed block recursive LU decomposition based matrix inversion algorithm or \textit{SPIN} (presented in Algorithm 5, 6, and 7 in \citep{liu2016spark}) has a complexity in terms of wall clock execution time requirement, where $n$ is the matrix dimension, $b$ is the number of splits, and $cores$ is the actual number of physical cores available in the cluster, as below
\end{lemma}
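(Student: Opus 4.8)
The plan is to mirror the proof of the preceding lemma step for step, reading off the \textit{LU} column of Table~\ref{tab:cost-LU-Spin} in place of the \textit{SPIN} column. Recall that the LU-based routine of Liu et al.~\cite{liu2016spark} inverts $A$ by first computing a block-recursive decomposition $A = LU$, then inverting each triangular factor by its own block recursion, and finally multiplying $U^{-1}L^{-1}$; this closing product is what the ``Additional Cost'' row $7\times(n/2)^3$, with parallelization factor $\min[n^2/4,\,cores]$, captures. First I would enumerate, at a generic level $i$ of the recursion tree (which contains $2^i$ nodes), every primitive transformation the algorithm invokes: the leaf-node serial inversion, now costing $9\,n^3/b^2$ in total because the triangular-solve leaf routine makes nine cubic-order passes rather than one; the $breakMat$ and $xy$ accessors; the element-wise $subtract$ and $scalarMul$ maps; and --- the dominant contribution --- the $12$ block multiplications per internal level of the decomposition.

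Next, for each primitive I would write its aggregate computation cost in the same form $\sum_{i=0}^{m-1} 2^i \times (\text{per-node cost at level }i)$ used for SPIN, but with the per-node costs tabulated in the LU column, and then divide by the corresponding parallelization factor --- $\min[b^2/4^{i},cores]$, $\min[b^2/4^{i+1},cores]$, $\min[b^2/4^{i+2},cores]$, or $\min[n^2/4^{i},cores]$, $\min[n^2/4^{i+1},cores]$ --- according to how many blocks that transformation touches. A feature absent from the SPIN analysis is that LU's multiplications fall into a ``large'' and a ``small'' class, reflecting combine steps that multiply sub-blocks of differing shapes; the multiply term is therefore the sum of the two rows $\tfrac{16n^3}{21b^3}(b^3-7b+6)$ and $\tfrac{8n^3}{42b^3}(b^3-7b+6)$ over their parallelization factors, plus the analogous sum of the two communication rows $\tfrac{8n^2(b^2-1)(8b^2-112)}{105b^2}$ and $\tfrac{n^2(b^2-1)(8b^2-112)}{105b^2}$. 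Summing the leaf term, the decomposition-phase terms, and the post-decomposition additional term, and collecting over a common set of parallelization denominators, yields the claimed closed-form cost.

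The main obstacle will be getting the recursion bookkeeping right. Unlike Strassen's inversion, which spawns exactly two recursive calls on quarter-sized matrices, the LU route recurses once in the decomposition phase and twice more while inverting the triangular factors, so the asserted counts ``$12$ multiplications per level'' and ``$7$ additional multiplications'' must be justified by unrolling Algorithms~5--7 of \cite{liu2016spark} and confirming that the sub-blocks they operate on at level $i$ really have side $n/2^{i+1}$; it is precisely this identification that fixes the exponents on $4$ and $8$ in the geometric sums and hence the constants $16/21$, $8/42$, $8/105$, $1/105$. A secondary subtlety is that the LU layout needs no $arrange$ step --- its block indices are produced directly by the combine maps --- so the $arrange$ term of the SPIN formula must be dropped and the $subtract$ and $scalarMul$ multiplicities re-counted for the LU algorithm; once these are pinned down, the remaining work is the same routine geometric-series algebra as in the SPIN case.
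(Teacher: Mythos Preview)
Your overall strategy --- read off the LU column of Table~\ref{tab:cost-LU-Spin}, sum the per-method costs across recursion levels, divide by the appropriate parallelization factors, and tack on the post-decomposition ``Additional Cost'' --- is exactly the paper's approach. However, there is one concrete discrepancy in your bookkeeping that would throw off the constants.

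You write each aggregate cost as $\sum_{i=0}^{m-1} 2^{i}\times(\text{per-node cost at level }i)$, carrying the $2^{i}$ node count over from the SPIN analysis. The paper explicitly flags that this is \emph{not} the correct count for the LU recursion: ``Here the number of LU call at level $i$ is $2^{i}-1$ instead of $2^{i}$ of SPIN.'' The entries you quote from the LU column of Table~\ref{tab:cost-LU-Spin} --- the $\tfrac{2}{3}(b^{2}-3b+2)$ for \textit{breakMat}, the $\tfrac{16n^{3}}{21b^{3}}(b^{3}-7b+6)$ for the large multiply, and so on --- are the results of summing with the $2^{i}-1$ weight, not $2^{i}$; re-deriving them with your proposed $2^{i}$ would not reproduce those expressions and would not collapse to equation~\eqref{eq:LU-cost}.

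Related to this, your description of the recursion structure is not quite the paper's. You say the LU route ``recurses once in the decomposition phase and twice more while inverting the triangular factors,'' but the paper treats the algorithm as a single recursive \textit{LU} method whose \textit{else} branch performs $4$ multiplies, $1$ subtract, and $2$ calls to a non-recursive \textit{getLU} (itself $4$ multiplies and $2$ \textit{scalarMul}s), followed at the top level by the $7$ post-decomposition multiplications. The ``$12$ multiplications per level'' you cite are the $4+2\times4$ inside each internal node of \emph{that} tree, not a count arising from separate triangular-inversion recursions. Once you correct the node count to $2^{i}-1$ and adopt this single-recursion view, the rest of your plan goes through unchanged.
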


\begin{equation}
\label{eq:LU-cost}
\begin{aligned}
Cost_{LU}&=\frac{9n^{3}}{b^{2}}+
\frac{(b-1)[210b^{2}(b-2)+64n^{2}(b+1)(b^{2}-14)]}{105b^{2}\times min\left[\frac{b^{2}}{4^{i}},cores \right ]} \\
&+\frac{(b-1)[70b^{2}(b-2)+8n^{2}(b+1)(b^{2}-14)]}{105b^{2}\times min\left[\frac{b^{2}}{4^{i+1}},cores \right ]} \\
&+\frac{(b-1)(b-2)}{105b^{2}\times min\left[\frac{b^{2}}{4^{i+2}},cores \right ]} \\
&+\frac{2n^{2}(b-1)[8n(b^{2}+b+6)+7b(b-2)]}{21b^{3}\times min\left[\frac{n^{2}}{4^{i}},cores \right ]} \\
&+\frac{8n^{3}(b-1)(b^{2}+b-6)}{42b^{3}\times min\left[\frac{n^{2}}{4^{i+1}},cores \right ]}
+ \frac{7n^{3}}{8\times min\left[\frac{n^{2}}{4},cores \right ]}
	\end{aligned}
\end{equation}

\begin{proof}
Liu et al. in \cite{liu2016spark} has described several algorithms for distributed matrix inversion using LU decomposition. We are referring the most optimized one (stated as Algorithm 5, 6 and 7 in the paper) for the performance analysis. The core computation of the algorithm is done with 1) a call to a recursive method \textit{LU} which basically decomposes the input matrix recursively until leaf nodes of the tree where the size of the matrix reaches the block size and 2) the computation after LU decomposition. The matrix inversion algorithm performs $7$ additional multiplications (as given in Algorithm 5 in \cite{liu2016spark}) of size $\left(\frac{n}{2}\right)$, providing additional cost of, which is basically $7$ matrix multiplications of dimension $\left(\frac{n}{2}\right)$. We call this as \textit{Additional Cost} and can be obtained as follows

\begin{equation}
\begin{aligned}
Comp_{AdditionalCost}= \frac{7\times \left(\frac{n}{2} \right )^{3}}{min\left[\frac{n^{2}}{4},cores \right ]}
\end{aligned}
\end{equation}

There are two primary parts of the \textit{LU} method --- \textit{if} part and \textit{else} part. \textit{If} part does the LU decomposition at the leaf nodes of the recursion tree while \textit{else} part does for the internal nodes. \textit{If} part requires $2$ LU decomposition, $4$ matrix inversion and $3$ matrix multiplications and there are $2^{p-q}$ number of leaf nodes in the recursion tree. Each of these processing requires $O(\left(\frac{n}{b}\right)^{3})$ time for a matrix of $n$ dimension. Therefore, the  total cost of the \textit{if} part is

\begin{equation}
\begin{aligned}
Comp_{leagNode}=9 \times 2^{p-q}\times (\frac{n}{b})^{3}=9\times \left(\frac{n^{3}}{b^{2}} \right )
\end{aligned}
\end{equation}

The \textit{else} part requires $4$ \textit{multiply}, $1$ \textit{subtraction} and $2$ calls to \textit{getLU} method. \textit{getLU} method compose the LU of a matrix by taking $9$ matrices of dimension $2^{k}$ and arranges them to return $3$ matrices of size $2^{k+1}$. It requires $4$ \textit{multiply} and $2$ \textit{scalarMul} methods of matrices of dimension $2^{k}$. 

The recursion scheme of LU decomposition is little bit different from \textit{SPIN}. Here the number of \textit{LU} call at level $i$ is $2^{i}-1$ instead of $2^{i}$ of \textit{SPIN}. The computation and communication costs for the methods (summarized in Table \ref{tab:cost-LU-Spin}) can be summed up to get equation \ref{eq:LU-cost}.
\end{proof}

\section{Experiments}
\label{sec:experiments}
In this section, we perform experiments to evaluate the execution efficiency of our implementation \textit{SPIN} comparing it with the distributed LU decomposition based inversion approach (to be mentioned as \textit{LU} from now) and scalability of the algorithm compared to ideal scalability. First, we select the fastest wall clock execution time among different partition size for each approach and compare them. Second, we conduct a series of experiments to individually evaluate the effect of partition size and matrix size of each competing approach. At last we evaluate the scalability of our implementation.

\subsection{Test Setup}
All the experiments are carried out on a dedicated cluster of 3 nodes. Software and hardware specifications are summarized in Table \ref{tab:test-setup}. Here \textit{NA} means \textit{Not Applicable}.

\begin{table}
	\caption{Summary of Test setup components specifications}
	\label{tab:test-setup}
	\begin{minipage}{\columnwidth}
		\begin{center}
			\begin{tabular}{lll}
				\toprule
				Component Name & Component Size & Specification \\
				\toprule
				Processor & 2 & Intel Xeon 2.60 GHz \\
				Core & 6 per processor & NA \\
				Physical Memory & 132 GB & NA \\
				Ethernet & 14 Gb/s & Infini Band \\
				OS & NA & CentOS 5 \\
				File System & NA & Ext3 \\
				Apache Spark & NA & 2.1.0 \\
				Apache Hadoop & NA & 2.6.0 \\
				%jBlas & NA & 1.2.4 \\
				Java & NA & 1.7.0 update 79 \\
				\bottomrule
			\end{tabular}
		\end{center}
	\end{minipage}
\end{table}

For block level multiplications both the implementation uses JBlas \cite{jblas}, a linear algebra library for Java based on \textit{BLAS} and \textit{LAPACK}. We have tested the algorithms on matrices with increasing cardinality from $(16 \times 16)$ to $(16384 \times 16384)$. All of these test matrices have been generated randomly using Java Random class. %The elements of the matrices are of double-precision $64$-bit IEEE $754$ floating point type.

\paragraph{\textbf{Resource Utilization Plan}}
While running the jobs in the cluster, we customize three parameters --- the number of executors, the executor memory and the executor cores. We wanted a fair comparison among the competing approaches and therefore, we ensured jobs should not experience \textit{thrashing} and none of the cases tasks should fail and jobs had to be restarted. For this reason, we restricted ourselves to choose the parameters value which provides good utilization of cluster resources and mitigating the chance of task failures. By experimentation we found that, keeping executor memory as $50$ GB ensures successful execution of jobs without \textit{``out of memory''} error or any task failures for all the competing approaches. This includes the small amount of overhead to determine the full request to YARN for each executor which is equal to $3.5$ GB. Therefore, the executor memory is $46.5$ GB. Though the physical memory of each node is $132$ GB, we keep only $100$ GB as YARN resource allocated memory for each node. Therefore, the total physical memory for job execution is $100$ GB resulting $2$ executors per node and a total $6$ executors.  We reserve, $1$ core for operating system and hadoop daemons. Therefore, available total core is $11$. This leaves $5$ cores for each executor. We used these values of the run time resource parameters in all the experiments except the scalability test, where we have tested the approach with varied number of executors. %The resource utilization plan is summarized in Table \ref{tab:resourceUtil}.

% \begin{table}
% 	\caption{Resource Utilization Plan for Efficiency Experiments}
% 	\label{tab:resourceUtil}
% 	\begin{minipage}{\columnwidth}
% 		\begin{center}
% 			\begin{tabular}{lll}
% 				\toprule
% 				Component Name & Component Size \\
% 				\toprule
% 				Executors & 5 \\
% 				Executor Cores & 5 \\
% 				Executor Memory & 40GB \\
% 				YARN Memory & 100GB \\
% 				\bottomrule
% 			\end{tabular}
% 		\end{center}
% 	\end{minipage}
% \end{table}

%\subsection{Practical Efficiency of SPIN}
%In this section, we demonstrate the practical utility of \textit{SPIN} compare to \textit{LU}.% as well as single node optimized matrix inversion approaches.

\subsection{Comparison with state-of-the-art distributed systems}
In this section, we compare the performance of \textit{SPIN} with \textit{LU}. We report the running time of the competing approaches with increasing matrix dimension in Figure \ref{fig:Fastest-Running-Time}. We take the best wall clock time (fastest) among all the running time taken for different block sizes. It can be seen that, \textit{SPIN} takes the minimum amount of time for all matrix dimensions. Also, as expected the wall clock execution time increases with the matrix dimension, non-linearly (roughly as $O(n^{3})$). Also, the gap in wall clock execution time between both \textit{SPIN} and \textit{LU} increases monotonically with input matrix dimension. As we shall see in the next section, both \textit{LU} and \textit{SPIN} follow a \textit{U} shaped curve as a function of block sizes, hence allowing us to report the minimum wall clock execution time over all block sizes.

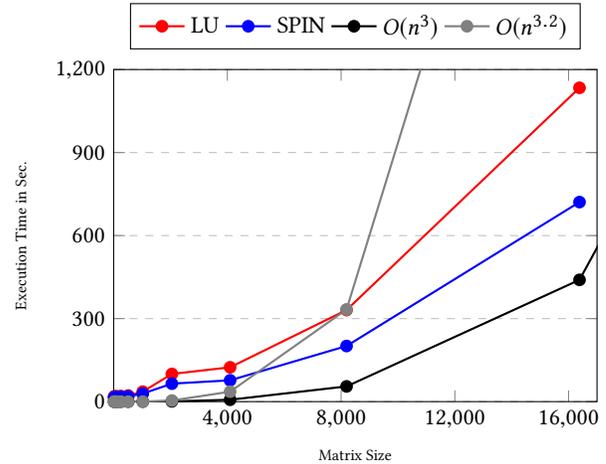
\begin{figure}
	\centering		
		\begin{tikzpicture}
		\begin{axis}[
        tick label style={/pgf/number format/fixed},
        scaled ticks=false,
		height=6cm,
		width=0.45\textwidth,
		xlabel={},
		ylabel={},
		xmin=0, xmax=17000,
		ymin=0, ymax=1200,
		xtick={4000,8000,12000,16000,20000},
		ytick={0,300,600,900,1200,1500,1800,2000},
		every axis plot/.append style={thick},
		xlabel={\scriptsize Matrix Size},
		ylabel={\scriptsize Execution Time in Sec.},
		legend style={at={(0.5,1.2)},
		anchor=north,legend columns=-1},
		ymajorgrids=true,
		grid style=dashed,
		]
		
		\addplot[
		color=red,
		mark=*,
		]
		coordinates {
			(16,18)
			(32,19)
			(64,19)
			(128,19)
			(256,20)
			(512,22)
			(1024,36)
			(2048,100)
			(4096,124)
			(8192,331)
			(16384,1134)
		};
		
		\addplot[
		color=blue,
		mark=*,
		]
		coordinates {
			(16,15)
			(32,16)
			(64,16)
			(128,16)
			(256,18)
			(512,19)
			(1024,29)
			(2048,65)
			(4096,77)
			(8192,201)
			(16384,721)
		};
        
        \addplot[
		color=black,
		mark=*,
		]
		coordinates {
			(16,0.0000004096)
			(32,0.0000032768)
			(64,0.0000262144)
			(128,0.0002097152)
			(256,0.0016777216)
			(512,0.013)
			(1024,0.107)
			(2048,0.859)
			(4096,6.8)
			(8192,55)
			(16384,440)
            (32768,3518)
		};
        
        \addplot[
		color=gray,
		mark=*,
		]
		coordinates {
			(16,0.00000047)
			(32,0.0000039)
			(64,0.000032)
			(128,0.00027)
			(256,0.0022)
			(512,0.018)
			(1024,0.15)
			(2048,3.95)
			(4096,36)
			(8192,333)
			(16384,3062)
            (32768,28147)
		};
	
		\legend{LU, SPIN, $O(n^{3})$, $O(n^{3.2})$}
		
		\end{axis}
		\end{tikzpicture}
	\caption{Fastest running time of LU and Strassen's based inversion among different block	sizes}
	\label{fig:Fastest-Running-Time}
\end{figure}

\subsection{Variation with partition size}
\label{sec:variation-with-partition-size}
In this experiment, we examine the performance of \textit{SPIN} with \textit{LU} with increasing partition size for each matrix size. We report the wall clock execution time of the approaches when partition size is increased within a particular matrix size. For each matrix size (from $(4096\times 4096)$ to $(16384\times 16384)$) we increase the partition size until we get a intuitive change in the results as shown in Figure. \ref{fig:Running-Time-Size}.

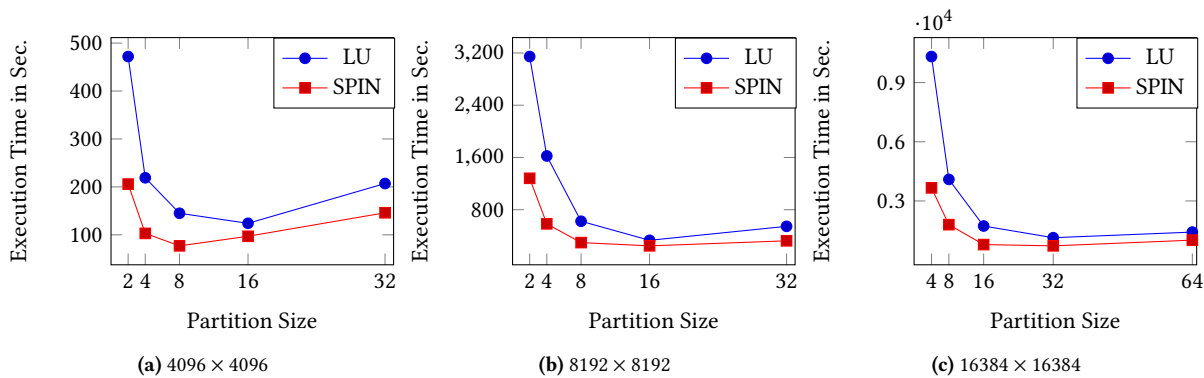
\begin{figure*}
	\begin{subfigure}[t]{.3\textwidth}
		\begin{tikzpicture}
		\begin{axis}[
		xmin=0,
		xmax=33,
		xtick={2,4,8,16,32},
		ytick={100,200,300,400,500},
		x tick label style={/pgf/number format/1000 sep=},
		xlabel={Partition Size},
		ylabel={Execution Time in Sec.},
		legend style={at={(0.79,1)},anchor=north},
		width=\textwidth,
		]
		\addplot coordinates {(2,472) (4,219) (8,145) (16,124) (32,207)};
		\addplot coordinates {(2,206) (4,103) (8,77) (16,97) (32,146)};
        \legend{LU, SPIN}
		\end{axis}
		\end{tikzpicture}
		\caption{$4096\times 4096$},
	\end{subfigure}%
	\begin{subfigure}[t]{.3\textwidth}
		\begin{tikzpicture}
		\begin{axis}[
		xmin=0,
		xmax=33,
		xtick={2,4,8,16,32},
		ytick={800,1600,2400,3200},
		x tick label style={/pgf/number format/1000 sep=},
		xlabel={Partition Size},
		ylabel={Execution Time in Sec.},
		legend style={at={(0.79,1)},anchor=north},
		width=\textwidth,
		]
		\addplot coordinates {(2,3146) (4,1623) (8,623) (16,331) (32,545)};
		\addplot coordinates {(2,1281) (4,583) (8,296) (16,246) (32,323)};
		\legend{LU, SPIN}
		\end{axis}
		\end{tikzpicture}
		\caption{$8192\times 8192$}
	\end{subfigure}%
	\begin{subfigure}[t]{.3\textwidth}
		\begin{tikzpicture}
		\begin{axis}[
		xmin=0,
		xmax=65,
		xtick={4,8,16,32,64},
		ytick={3000,6000,9000,12000},
		x tick label style={/pgf/number format/1000 sep=},
		xlabel={Partition Size},
		ylabel={Execution Time in Sec.},
		legend style={at={(0.79,1)},anchor=north},
		width=\textwidth,
		]
		\addplot coordinates {(4,10322) (8,4091) (16,1727) (32,1134) (64,1418)};	
		\addplot coordinates {(4,3660) (8,1796) (16,786) (32, 721) (64,1010)};
        \legend{LU, SPIN}
		\end{axis}
		\end{tikzpicture}
		\caption{$16384\times 16384$}
	\end{subfigure}%
	\caption{Comparing running time of LU and SPIN for matrix size $(4096\times 4096)$, $(8192\times 8192)$, $(16384\times 16384)$ for increasing partition size}\label{fig:Running-Time-Size}
\end{figure*}

It can be seen that both \textit{LU} and \textit{SPIN} follows a \textit{U} shape curve. However, \textit{SPIN} outperforms \textit{LU} when they have the same partition size, for all the matrix sizes. The reason of this is manifold. First of all, \textit{LU} requires $9$ times more $O\left(\frac{n}{b}\right)^{3}$ operations compared to a single operation of \textit{SPIN}. For small partition sizes, where \textit{leafNode} dominates the overall wall clock execution time, this cost is responsible for \textit{LU}'s slower performance.

Additionally, when the partition size increases, the number of recursion level also increases and consequently the cost of \textit{multiply} method increases which is the costliest method call. Though there is a difference between the number of recursion level for any partition size (), the additional matrix multiplication cost (as shown in Table \ref{tab:cost-LU-Spin}) provides enough cost to slowdown \textit{LU}'s performance.

\subsection{Comparison between theoretical and experimental result}
In this experiment, we compare the theoretical cost of \textit{SPIN} with the experimental wall clock execution time to validate our theoretical cost analysis. Figure \ref{fig:ThVsEx} shows the comparison for three matrix sizes (from $(4096\times 4096)$ to $(16384\times 16384)$ and for each matrix size with increasing partition size.

\begin{figure*}
	\begin{subfigure}[t]{.3\textwidth}
		\begin{tikzpicture}
		\begin{axis}[
		xmin=0,
		xmax=17,
		xtick={2,4,8,16},
		ytick={50,100,150,200},
		x tick label style={/pgf/number format/1000 sep=},
		xlabel={Partition Size},
		ylabel={Execution Time in Sec.},
		legend style={at={(0.5,1)},anchor=north},
		width=\textwidth,
		]
		\addplot coordinates {(2,197) (4,76) (8,46) (16,47)};     \addlegendentry{\small Theoretical}
		\addplot coordinates {(2,206) (4,103) (8,77) (16,97)};
   \addlegendentry{\small Experimental}
		\end{axis}
		\end{tikzpicture}
		\caption{$4096\times 4096$},
	\end{subfigure}%
	\begin{subfigure}[t]{.3\textwidth}
		\begin{tikzpicture}
		\begin{axis}[
		xmin=0,
		xmax=33,
		xtick={2,4,8,16,32},
		ytick={800,1600,2400,3200},
		x tick label style={/pgf/number format/1000 sep=},
		xlabel={Partition Size},
		ylabel={Execution Time in Sec.},
		legend style={at={(0.5,1)},anchor=north},
		width=\textwidth,
		]
		\addplot coordinates {(2,1559) (4,579) (8,337) (16,309) (32,465)};
    \addlegendentry{Theoretical}
		\addplot coordinates {(2,1281) (4,583) (8,296) (16,246) (32,323)};
\addlegendentry{Experimental}
		\end{axis}
		\end{tikzpicture}
		\caption{$8192\times 8192$}
	\end{subfigure}%
	\begin{subfigure}[t]{.3\textwidth}
		\begin{tikzpicture}
		\begin{axis}[
		xmin=0,
		xmax=65,
		xtick={4,8,16,32,64},
		ytick={1200,2400,3600,4800,6000},
		x tick label style={/pgf/number format/1000 sep=},
		xlabel={Partition Size},
		ylabel={Execution Time in Sec.},
		legend style={at={(0.5,1)},anchor=north},
		width=\textwidth,
		]
		\addplot coordinates {(4,4518) (8,2559) (16,2200) (32,2763) (64,5549)};	
    \addlegendentry{\small Theoretical}
		\addplot coordinates {(4,3660) (8,1796) (16,786) (32, 721) (64,2285)};
   \addlegendentry{\small Experimental}
		\end{axis}
		\end{tikzpicture}
		\caption{$16384\times 16384$}
	\end{subfigure}%
	\caption{Comparing theoretical and experimental running time of \textit{SPIN} for matrix size $(4096\times 4096)$, $(8192\times 8192)$, $(16384\times 16384)$ for increasing partition size}\label{fig:ThVsEx}
\end{figure*}
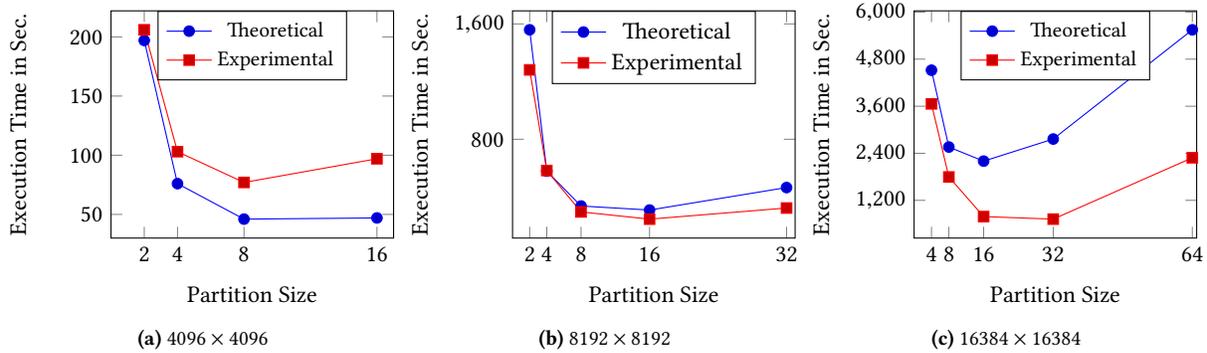

As expected, both theoretical and experimental wall clock execution time shows a \textit{U} shaped curve with increasing partition size. The reason is that, for smaller partition sizes, the block size becomes very large for large matrix size. As a result, the single node matrix inversion shares most of the execution time and subdues the effect of matrix multiplication execution time which are processed distributedly. That is why we find large execution time at beginning, which is also depicted in Table \ref{tab:wall-clock-methods}, where experimental wall clock execution time is tabulated for different methods used in the algorithm for matrix of dimension $4096$. It is seen that for $b=2$, the \textit{leafNode} cost is far more than matrix \textit{multiply} method.

Later, when partition size further increases, the leaf node cost drops sharply as the cost depends on $\frac{n^{3}}{b^{2}}$, which decreases the cost by square of partition size. On the other hand, the number of \textit{multiply} becomes large for enhanced recursion level, and thus the effective cost which subdues the effect of \textit{leafNode} cost. As in Table \ref{tab:wall-clock-methods}, for $b=8$ onwards the \textit{multiply} cost becomes more and more dominating resulting further increase in wall clock execution time.

\begin{table}
\caption{Experimental results of wall clock execution time of different methods in \textit{SPIN}(The unit of execution time is millisecond)}
\label{tab:wall-clock-methods}
\begin{minipage}{\columnwidth}
\begin{center}
\begin{tabular}{|c|c|c|c|c|}
\hline
Method & b = 2 & b = 4 & b = 8 & b = 16 \\
\hline
leafNode & 43504 & 11550 & 5040 & 3980 \\
\hline
breakMat & 178 & 441 & 901 & 1764 \\
\hline
xy & 2913 & 1353 & 693 & 309 \\
\hline
multiply & 7836 & 13116 & 23256 & 37968 \\
\hline
subtract & 1412 & 1854 & 2820 & 5592 \\
\hline
scalar & 333 & 728 & 1308 & 2450 \\
\hline
arrange & 307 & 685 & 1510 & 3074 \\
\hline
Total & 56483 & 29727 & 35528 &	55137 \\
\hline
\end{tabular}
\end{center}
\end{minipage}
\end{table}

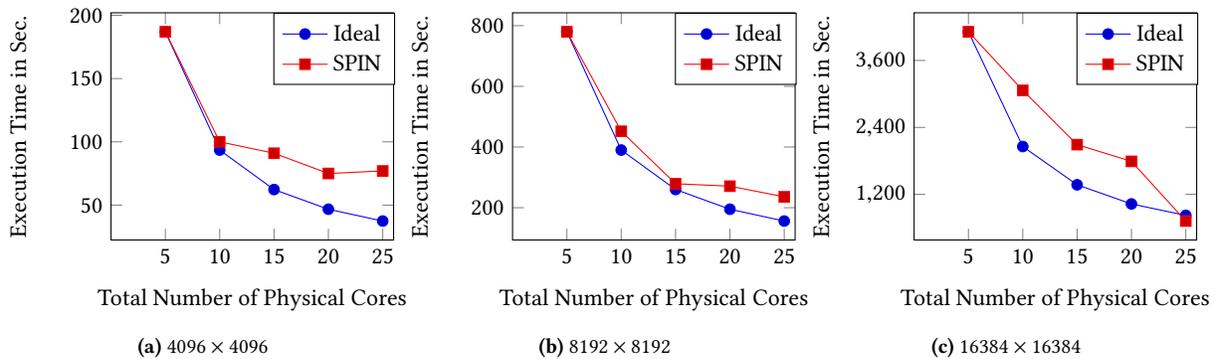
\begin{figure*}[t]
	\begin{subfigure}[t]{.3\textwidth}
		\begin{tikzpicture}
		\begin{axis}[
		xmin=0,
		xmax=26,
		xtick={5,10,15,20,25},
		ytick={50,100,150,200},
		x tick label style={/pgf/number format/1000 sep=},
		xlabel={Total Number of Physical Cores},
		ylabel={Execution Time in Sec.},
		legend style={at={(0.79,1)},anchor=north},
		width=\textwidth,
		]
		\addplot coordinates {(5,187) (10,93.5) (15,62.3) (20,46.75)(25,37.4)};
        \addlegendentry{Ideal}
		\addplot coordinates {(5,187) (10,100) (15,91) (20,75)(25,77)};        \addlegendentry{SPIN}
		\end{axis}
		\end{tikzpicture}
		\caption{$4096\times 4096$},
	\end{subfigure}%
	\begin{subfigure}[t]{.3\textwidth}
		\begin{tikzpicture}
		\begin{axis}[
		xmin=0,
		xmax=26,
		xtick={5,10,15,20,25},
		ytick={200,400,600,800},
		x tick label style={/pgf/number format/1000 sep=},
		xlabel={Total Number of Physical Cores},
		ylabel={Execution Time in Sec.},
		legend style={at={(0.79,1)},anchor=north},
		width=\textwidth,
		]
		\addplot coordinates {(5,780) (10,390) (15,260) (20,195) (25,156)};
   \addlegendentry{Ideal}
		\addplot coordinates {(5,780) (10,452) (15,279) (20,271) (25,236)};
\addlegendentry{SPIN}
		\end{axis}
		\end{tikzpicture}
		\caption{$8192\times 8192$}
	\end{subfigure}%
	\begin{subfigure}[t]{.3\textwidth}
		\begin{tikzpicture}
		\begin{axis}[
		xmin=0,
		xmax=26,
		xtick={5,10,15,20,25},
		ytick={1200,2400,3600,4800,6000},
		x tick label style={/pgf/number format/1000 sep=},
		xlabel={Total Number of Physical Cores},
		ylabel={Execution Time in Sec.},
		legend style={at={(0.79,1)},anchor=north},
		width=\textwidth,
		]
		\addplot coordinates {(5,4115) (10,2057) (15,1371) (20,1028) (25,823)};	
   \addlegendentry{Ideal}
		\addplot coordinates {(5,4115) (10,3064) (15,2091) (20,1791) (25,721)};
   \addlegendentry{SPIN}
		\end{axis}
		\end{tikzpicture}
		\caption{$16384\times 16384$}
	\end{subfigure}%
	\caption{The scalability of \textit{SPIN}, in comparison with ideal scalability (blue line), on matrix $(4096\times 4096)$, $(8192\times 8192)$ and $(16384\times 16384)$}\label{fig:scale}
\end{figure*}

\subsection{Scalability}
In this section, we investigate the scalability of \textit{SPIN}. For this, we generate three test cases, each containing a different set of two matrices of sizes equal to $(4096\times 4096)$, $(8192\times 8192)$ and $(16384\times 16384)$. The running time vs. the number of spark executors for these $3$ pairs of matrices is shown in Figure \ref{fig:scale}. The ideal scalability line (i.e. $T(n) = T(1)/n$ - where $n$ is the number of executors) has been over-plotted on this figure in order to demonstrate the scalability of our algorithm. We can see that \textit{SPIN} has a good scalability, with a minor deviation from ideal scalability when the size of the matrix is low (i.e. for $(4096\times 4096)$ and $(8192\times 8192)$).

\section{Conclusion}
In this paper, we have focused on the problem of distributed matrix inversion of large matrices using Spark framework. To make large scale matrix inversion faster, we have implemented Strassen's matrix inversion technique which requires six multiplications in each recursion step. We have given the detailed algorithm, called \textit{SPIN}, of the implementation and also presented the details of the cost analysis along with the baseline approach using LU decomposition. By doing that, we discovered that the primary bottleneck of inversion algorithm is matrix multiplications and that \textit{SPIN} is faster as it requires less number of multiplications compared to LU based approach. 

We have also performed extensive experiments on wall clock execution time of both the approaches for increasing partition size as well as increasing matrix size. Results showed that \textit{SPIN} outperformed LU for all the partition and matrix sizes and also the difference increases as we increase matrix size. We also showed the resemblance between theoretical and experimental findings of \textit{SPIN}, which validated our cost analysis. At last we showed that \textit{SPIN} has a good scalability with increasing matrix size.

%\appendix

%\bibliographystyle{ACM-Reference-Format}
%\bibliography{sample-bibliography}

%%% -*-BibTeX-*-
%%% Do NOT edit. File created by BibTeX with style
%%% ACM-Reference-Format-Journals [18-Jan-2012].

\end{document}